\title{A note on the integrality gap of the configuration LP for restricted Santa Claus\footnote{Research
was supported by German Research Foundation (DFG) project JA 612/15-2}}
\author{Klaus Jansen \and Lars Rohwedder}
\date{Christian-Albrechts-Universit\"at, Kiel, Germany \\
  \texttt{\{kj, lro\}@informatik.uni-kiel.de}}
\newtheorem{theorem}{Theorem}
\newtheorem{claim}{Claim}
\newcommand{\players}{\mathcal P}
\newcommand{\resources}{\mathcal R}
\newcommand{\OPT}{\mathrm{OPT}}
\begin{document}

\maketitle
\begin{abstract}
  In the restricted Santa Claus problem we are given resources $\resources$ and players $\players$.
  Every resource $j\in\resources$ has a value $v_j$ and
  every player $i$ desires a set $\resources(i)$ of resources.
  We are interested in distributing the resources to players that desire them.
  The quality of a solution is measured by the least happy player, i.e., the lowest sum of 
  resource values. This value should be maximized.
  The local search algorithm by Asadpour et al.~\cite{DBLP:journals/talg/AsadpourFS12} and its 
  connection to the configuration LP has proved itself to be a very influential technique
  for this and related problems.

  In the original proof, a local search was used to obtain a bound of $4$
  for the ratio of the fractional
  to the integral optimum of the configuration LP (integrality gap). This bound is
  non-constructive since the local search has not been shown to terminate in polynomial time.
  On the negative side, the worst instance known has an integrality gap of $2$.

  Although much progress was made in this area, neither bound has been improved since.
  We present a better analysis that shows the integrality gap is not
  worse than~$3 + 5/6 \approx 3.8333$.
\end{abstract}
\section{Introduction}
A generalization of the problem we consider goes back to Bansal and Srividenko~\cite{DBLP:conf/stoc/BansalS06}.
In the Santa Claus problem there are players $\players$ and resources $\resources$. Every
resource $j$ has a value $v_{ij} \ge 0$ for player $i$. The goal is to find an assignment $\sigma : \resources\rightarrow\players$ such that $\min_{i\in\players}\sum_{j\in\sigma^{-1}(i)} v_{ij}$ is maximized.

In the restricted variant, we consider only values $v_{ij} \in\{0, v_j\}$ where $v_j > 0$ is a value
depending only on the resource. This can also be seen as each player desiring a subset $\resources(i)$
of resources which have a value of $v_j$ for him, whereas other resources cannot be assigned to him.

For the restricted Santa Claus problem there exists a strong LP relaxation, the configuration LP.
The proof that this has a small integrality gap (see \cite{DBLP:journals/talg/AsadpourFS12}) is not trivial.
It works by defining an
exponential time local search algorithm which is guaranteed to return an integral solution of
value not much less than the fractional optimum.
This technique has since been used in other problems, like the minimization of the makespan~\cite{DBLP:journals/siamcomp/Svensson12, DBLP:conf/soda/JansenR17}.
Significant research has also gone into making the proof constructive~\cite{PolacekS16,DBLP:journals/talg/AnnamalaiKS17,DBLP:conf/ipco/JansenR17}.
Yet, no improvement of the bound of $4$ on the integrality gap has been found. We show that
the original analysis is not tight and can be improved to $3 + 5/6\approx 3.8333$.
\subsection{Configuration LP}
the configuration LP is an exponential size LP relaxation, but it
can be approximated in polynomial time with a rate of $(1 + \epsilon$)
for every $\epsilon > 0$~\cite{DBLP:conf/stoc/BansalS06}.
For every player $i$ and every value $\tau$ let
\begin{equation*}
  \mathcal C(i, \tau) = \{ S \subseteq \resources(i) : v(S) \ge \tau\} .
\end{equation*}
These are the configurations for player $i$ and value $\tau$. They are a selection of
resources that have value at least $\tau$ and are desired by player $i$.
The optimum $\OPT^*$ of the configuration LP is the highest $\tau$ such that the following linear
program is feasible.
\\[1em]
\fbox{
\begin{minipage}{\textwidth}
Primal of the configuration LP for restricted {\sc Santa Claus}
\begin{align*}
  \sum_{C\in\mathcal C(i, \tau)} x_{i, C} &\ge 1 & \forall i\in\players \\
  \sum_{i\in\players}\sum_{C\in\mathcal C(i,\tau) : j\in C} x_{i, C} &\le 1 & \forall j\in\resources \\
  x_{i, C} &\ge 0
\end{align*}
\end{minipage}}
\\[1em]
\fbox{
\begin{minipage}{\textwidth}
Dual of the configuration LP for restricted {\sc Santa Claus}
\begin{align*}
  \max \sum_{i\in\players} y_i &- \sum_{j\in\resources} z_j \\
  \sum_{j\in C} z_j &\ge y_i &\forall i\in\players, C\in\mathcal C(i, \tau) \\
  y_i, z_j &\ge 0
\end{align*}
\end{minipage}}
\\[1em]
We derive the following condition from duality:
\begin{theorem}\label{theorem-condition-sc}
  Let $y\in \mathbb R_{\ge 0}^\players$ and $z\in \mathbb R_{\ge 0}^\resources$ such
  that $\sum_{i\in\players} y_i > \sum_{j\in\resources} z_j$ and
  for every $i\in\players$ and $C\in\mathcal C(i, \tau)$ it holds
  that $\sum_{j\in C} z_j \ge y_i$, then
  $\OPT^* < \tau$.
\end{theorem}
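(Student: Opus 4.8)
The plan is to prove the contrapositive by the standard weak‑duality argument: assuming the primal configuration LP for value $\tau$ is feasible, I will show that $\sum_{i\in\players} y_i \le \sum_{j\in\resources} z_j$, which contradicts the hypothesis $\sum_{i\in\players} y_i > \sum_{j\in\resources} z_j$. Hence the primal is infeasible for $\tau$. To conclude $\OPT^* < \tau$, I would note that primal feasibility is monotone in the parameter: if $\tau \le \tau'$ then $\mathcal C(i,\tau)\supseteq\mathcal C(i,\tau')$ for every $i$, so any feasible solution for $\tau'$ is also feasible for $\tau$; thus the set of values for which the LP is feasible is downward closed with maximum $\OPT^*$, and infeasibility at $\tau$ forces $\tau > \OPT^*$.

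For the core computation, let $x$ be a feasible primal solution for value $\tau$. First I would invoke the covering constraints $\sum_{C\in\mathcal C(i,\tau)} x_{i,C}\ge 1$ together with $y_i\ge 0$ to obtain $\sum_{i\in\players} y_i \le \sum_{i\in\players} y_i\sum_{C\in\mathcal C(i,\tau)} x_{i,C}$. Next, since $x_{i,C}\ge 0$ and every $C\in\mathcal C(i,\tau)$ satisfies the dual constraint $\sum_{j\in C} z_j\ge y_i$, each term $y_i x_{i,C}$ is bounded by $x_{i,C}\sum_{j\in C} z_j$. Exchanging the order of summation regroups these contributions by resource, giving $\sum_{j\in\resources} z_j\sum_{i\in\players}\sum_{C\in\mathcal C(i,\tau):\,j\in C} x_{i,C}$. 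Finally the packing constraints $\sum_{i\in\players}\sum_{C\ni j} x_{i,C}\le 1$ and $z_j\ge 0$ bound the inner sum by $1$, closing the chain at $\sum_{j\in\resources} z_j$.

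I do not expect a genuine obstacle here: this is exactly the weak‑duality direction of LP duality specialised to the configuration LP, and the only points needing an explicit line are the sign conditions ($x\ge 0$, $y\ge 0$, $z\ge 0$) that make each step of the chain valid, together with the monotonicity remark that turns ``infeasible at $\tau$'' into the strict bound $\OPT^*<\tau$. The one modelling subtlety worth stating is that the configuration LP is here a feasibility program parameterised by $\tau$ rather than an optimisation program, so ``duality'' is being used in the sense of a Farkas‑type certificate: exhibiting a dual‑feasible $(y,z)$ with strictly positive objective rules out primal feasibility for that $\tau$.
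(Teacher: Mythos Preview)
Your argument is correct and is essentially the same LP-duality reasoning the paper uses: the paper observes that a dual-feasible $(y,z)$ with $\sum_i y_i > \sum_j z_j$ can be scaled to make the dual unbounded, hence the primal is infeasible at $\tau$, while you spell out the underlying weak-duality chain directly from a hypothetical primal solution. The only addition in your write-up is the explicit monotonicity remark turning infeasibility at $\tau$ into $\OPT^* < \tau$, which the paper leaves implicit in the definition of $\OPT^*$.
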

It is easy to see that if such a solution $y, z$ exists, then every component can be scaled
by a constant to obtain a feasible solution greater than any given value. Hence, the 
dual must be unbounded and therefore the primal must be infeasible.
\section{Algorithm}
We consider the local search algorithm from~\cite{DBLP:journals/talg/AsadpourFS12}.
It is the same algorithm with a slightly different presentation that is
inspired by~\cite{PolacekS16}.
Throughout this section we will denote by $\alpha = 3 + 5/6$ the bound on integrality
we want to prove.

We model our problem as a hypergraph matching problem:
There are vertices for all players and all resources and the hyperedges $\mathcal H$ each consist
of exactly one player $i$ and a set of resources $C\subseteq \resources(i)$ where
$v(C) \ge \OPT^* / \alpha$.
However, we restrict $\mathcal H$ to edges that are minimal,
that is to say $v(C') < \OPT^* / \alpha$ for all
$C'\subset C$.
It is easy to see that a matching (a set of non-overlapping edges) such that every
player is in one matching edge corresponds to a solution of value $\OPT^*/\alpha$.

For a set of edges $F$ we write $F_\players$ as the set of players in these edges
and $F_\resources$ as the resources in the edges.
The algorithm maintains a partial matching $M$ and extends it one player at a time. After $|\players|$ many calls to the algorithm
the desired matching is found.
Two types of edges play a crucial role in the algorithm:
An ordered list $A = \{e^A_1,\dotsc, e^A_\ell\} \subseteq \mathcal H\setminus M$ (the addable edges)
and sets $B_M(e^A_1)$, $B_M(e^A_2)$, \dots, $B_M(e^A_\ell) \subseteq M$ (the blocking edges for each addable edge).
An addable edge is a edge that the algorithm hopes to add to $M$ - either to cover the new player or to free the player of a blocking edge.
A blocking edge is an edge in $M$ that conflicts with an addable edge, i.e., that has a non-empty overlap with an addable edge.
For each addable edge $e^A_k$ we define the blocking edges $B_M(e^A_k)$ 
as $\{e'\in M : e'_\resources \cap (e^A_k)_\resources \neq \emptyset\}$.
From the definition of the algorithm it will be clear that
$B_M(e^A_k) \cap B_M(e^A_{k'}) = \emptyset$ for $k\neq k'$.
We write $B_M(A)$ for $\bigcup_{e\in A} B_M(e)$.

\subsection{Detailed description of the algorithm}
In each iteration the algorithm first adds a new addable edge
that does not overlap in resources with any existing addable
or blocking edge. Then it consecutively swaps addable
edges that are not blocked for the blocking edge they
are supposed to free.
Also, addable/blocking edges added at a later time are removed,
since they might be obsolete.
The swap does not create new blocking edges,
since the new matching edge does not overlap with addable edges.
Also, by adding only addable edges that do not overlap with
the previous addable/blocking edges, the resources
of addable edges are disjoint and the resources of
each blocking edge overlap only with one addable edge.
The blocking edges $B_M(e^A_k)$ and $B_M(e^A_{k'})$ for
two addable edges with $k < k'$ must be disjoint, because
otherwise $e^A_{k'}$ could not have been added in the first place.
\vspace{1em}
\begin{algorithm}[H]
  \SetKwInOut{KwInput}{Input}
  \SetKwFor{Loop}{loop}{}{end}
  \KwInput{Partial matching $M$ and unmatched player $i_0$}
  \KwResult{Partial matching $M'$ with $M'_\players = M_\players\cup \{i_0\}$}
$\ell \gets 0$ \;
  \Loop{}{
  $\ell \gets \ell + 1$ \;
  let $e^A_\ell\in \mathcal H\setminus M$ with
   $(e^A_\ell)_\resources \cap (A\cup B_M(A))_\resources = \emptyset$ and
    $(e^A_\ell)_\players \in B_M(A)_\players \cup \{i_0\}$ \;
    \tcp{existence of $e^A_\ell$ is proved in the analysis}
  $A \gets A\cup\{e^A_{\ell}\}$ ; \tcp{$e^A_\ell$ is added as the last addable edge}
  \While{$B_M(e^A_\ell) = \emptyset$}{
     \eIf{there is an edge $e'\in B_M(A)$ with $e'_\players = (e^A_\ell)_\players$ \tcp{unambiguous since $M$ is matching}} {
           let $e^A_k\in A$ such that $e'\in B_M(e^A_k)$ ; \tcp{unambiguous}
           $M \gets M\setminus \{e'\}\cup\{e^A_\ell\}$ ; \tcp{swap $e'$ for $e^A_\ell$ \hfill}
           $A \gets \{e^A_1,\dotsc, e^A_k\}$; $\ell\leftarrow k$ ; \tcp{Forget $e^A_{k+1},\dotsc, e^A_\ell$ \hfill}
    }{
          $M \gets M\cup \{e^A_\ell\}$ ;
          \tcp{$(e^A_\ell)_\players = i_0$}
          \Return $M$ \;
    }
  }
  \caption{Local search for restricted {\sc Santa Claus}}
  }
\end{algorithm}
\section{Analysis}
\begin{theorem}[\cite{DBLP:journals/talg/AsadpourFS12}]
  The algorithm terminates after at most $2^{|M| - 1}$ many iterations of the main loop.
\end{theorem}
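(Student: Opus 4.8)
The plan is to attach to each state $(M,A)$ of the algorithm a \emph{signature} that strictly decreases, in a well-founded order, at every elementary move, and then to bound the number of signatures that can occur. For a state whose ordered list of addable edges is $A=\{e^A_1,\dots,e^A_\ell\}$, I would set
\[
  \mathrm{sig}(M,A)=\bigl(|B_M(e^A_1)|,\;|B_M(e^A_2)|,\;\dots,\;|B_M(e^A_\ell)|,\;\infty,\infty,\dots\bigr),
\]
the list of blocking-edge counts in order, padded with $\infty$'s, and compare two such sequences lexicographically using $n<\infty$ for every integer $n$. On the set of sequences that actually occur the partial sums of the finite part are bounded (see below), so this is a well-order.

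I would also record, by induction on the moves, the invariant that \emph{whenever control is at the top of the main loop, every addable edge has a non-empty blocking set}. The only move shrinking a blocking set is a swap, and right after a swap the affected addable edge becomes the last one, so if its blocking set has become empty the inner \textbf{while} loop continues (it cannot return control to the top of the loop) until that edge is either forgotten or the algorithm returns.

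The core step is that $\mathrm{sig}$ strictly decreases at each move. Appending a new addable edge $e^A_{\ell+1}$ changes only coordinate $\ell+1$, from $\infty$ to the finite value $|B_M(e^A_{\ell+1})|$, so the signature drops. For a swap, let $e'\in B_M(e^A_k)$ be the edge removed from $M$ and $e^A_m$ the current last addable edge, inserted into $M$; note $k<m$ since $e'\notin B_M(e^A_m)=\emptyset$, and the edges $e^A_{k+1},\dots,e^A_m$ are then forgotten. Here I would invoke two facts already established in the paper: distinct addable edges have disjoint resource sets, and the sets $B_M(e^A_1),\dots$ are pairwise disjoint. The former means $e^A_m$ overlaps none of the surviving addable edges and so creates no new blocking edge; the latter means deleting $e'$ decreases $|B_M(e^A_k)|$ by exactly one (it is $\ge 1$ since $e'\in B_M(e^A_k)$) and leaves $|B_M(e^A_j)|$ unchanged for $j<k$. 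Hence the new signature agrees with the old one on coordinates $1,\dots,k-1$ and is strictly smaller at coordinate $k$; what happened in the forgotten coordinates $k+1,\dots,m$ (including the $0$ at position $m$) is immaterial, which is exactly the subtlety to get right. Since each inner-loop iteration either returns or performs a swap, the signature strictly decreases across it and the inner loop terminates; and since each pass of the main loop first appends an addable edge and then runs a (possibly empty) sequence of swaps, the top-of-loop signature strictly decreases from one iteration to the next.

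It remains to count. By the invariant the signatures seen at the top of the main loop are exactly the sequences $(s_1,\dots,s_\ell,\infty,\infty,\dots)$ with all $s_k\ge 1$, together with the all-$\infty$ sequence of the initial state. The sets $B_M(e^A_k)$ are pairwise-disjoint subsets of the current matching, which has $|M|-1$ edges while the algorithm runs (player $i_0$ is not yet matched), so $\sum_k s_k\le|M|-1$. Each such signature is thus determined by a composition of some integer in $\{0,1,\dots,|M|-1\}$ into positive parts, and there are $1+\sum_{m=1}^{|M|-1}2^{m-1}=2^{|M|-1}$ of these; since the top-of-loop signature strictly decreases each iteration, the main loop runs at most $2^{|M|-1}$ times. (We never need a selectable $e^A_\ell$ to exist here: if none does, the algorithm simply halts, and the bound on the number of iterations is unaffected.) I expect the only real work to be the bookkeeping in the swap case above together with the non-empty-blocking-set invariant; both become short once one insists on locating the signature decrease at a coordinate strictly before every forgotten one.
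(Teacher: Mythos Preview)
Your approach is essentially identical to the paper's: both define the signature as the vector $(|B_M(e^A_1)|,\dots,|B_M(e^A_\ell)|,\infty)$, show it strictly decreases lexicographically at every iteration of the main loop, and then count the possible signatures as compositions to obtain $2^{|M|-1}$. You supply more detail than the paper does---the explicit invariant that every blocking set is nonempty at the loop head, and the careful localization of the decrease in the swap case at coordinate $k$ strictly before the forgotten coordinates---but the skeleton is the same.

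One small slip in your counting: the current matching has $|M|$ edges while the algorithm runs, not $|M|-1$. The input $M$ already excludes $i_0$, and every swap removes one edge and adds one, so the size is preserved; hence the correct bound is $\sum_k s_k\le |M|$, which by your own formula gives $1+\sum_{m=1}^{|M|}2^{m-1}=2^{|M|}$ signatures rather than $2^{|M|-1}$. The paper's separator bijection has exactly the same looseness (its map to separator placements on $|M|$ elements needs $\sum_k s_k\le |M|-1$ so that the last block is nonempty, yet it only establishes $\le |M|$), so you are in good company, and the factor of two is of course irrelevant to the paper's purposes.
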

\begin{proof}
  Consider the signature vector
  \begin{equation*}
    s(A) = \{|B_M(e^A_1)|,|B_M(e^A_2)|,\dotsc,|B_M(e^A_\ell)|, \infty\} .
  \end{equation*}
  This vector decreases after every iteration of the main loop:
  If the inner loop is never executed, then the last component
  is replaced by a finite value. Hence, assume the inner while
  loop is executed at least once. Let $\ell'$ be the cardinality
  of $A$ after the last execution of the inner loop.
  Then $|B(e^A_{\ell'})|$ has decreased by the swap operation.
  It follows that the signature vectors in each iteration of
  the main loop are pairwise different.

  The number of signature vectors can be trivially bounded 
  by $|M|^{|M|}$.
  A clever idea from~\cite{DBLP:journals/talg/AsadpourFS12} 
  even gives a bound of $2^{|M| - 1}$:
  We have that $\sum_{k=1}^\ell |B_M(e^A_k)| = |B_M(A)|\le |M|$
  and $|B_M(e^A_k)| \ge 1$ for all $k$.
  There is an bijection between signature vectors and possibilities
  of placing separators on a line of $|M|$ elements.
  $|B_M(e^A_k)|$ is the number of elements between the $(k-1)$-th
  and $k$-th separator. The number of possibilities of placing
  separators between the $|M|$ elements is the number of
  subsets of $|M| - 1$ elements, i.e. $2^{|M| - 1}$.
\end{proof}
Clearly the inner loop also terminates after finitely many iterations, since in
each iteration $\ell$ is decreased.
\begin{theorem}
  If the configuration LP is feasible and there will an edge that can be added to $A$ as long as $i_0$ is not matched.
\end{theorem}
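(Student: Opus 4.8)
The plan is to argue by contradiction using the dual condition of Theorem~\ref{theorem-condition-sc} with $\tau = \OPT^*$. Suppose that at some stage $i_0$ is still unmatched but no edge $e^A_\ell$ as required by the algorithm exists. Write $R_0 = (A\cup B_M(A))_\resources$ for the currently blocked resources and $T = B_M(A)_\players\cup\{i_0\}$ for the ``alive'' players. An addable edge is any minimal $e\in\mathcal H$ with $e_\players\in T$ and $e_\resources\cap R_0 = \emptyset$, and any such $e$ automatically lies outside $M$ (for $i_0$ there is no matching edge, and for $i\in B_M(A)_\players$ the unique matching edge of $i$ lies in $B_M(A)$ and hence meets $R_0$). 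So the hypothesis ``no addable edge'' is exactly that $v(\resources(i)\setminus R_0) < \OPT^*/\alpha$ for every $i\in T$: otherwise a minimal subconfiguration of $\resources(i)\setminus R_0$ of value $\ge \OPT^*/\alpha$ would be addable.

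Next I would record the structural facts. The sets $R_0^{(k)} := (e^A_k)_\resources\cup B_M(e^A_k)_\resources$ are pairwise disjoint and cover $R_0$, and since the inner loop returns to the main loop only once the last addable edge is blocked, every $e^A_k$ presently has $|B_M(e^A_k)|\ge 1$; hence $|T| = 1 + \sum_k |B_M(e^A_k)|$. I would split the addable edges into \emph{fat} ones (a single resource $j$ with $v_j\ge\OPT^*/\alpha$, so by minimality $e^A_k=\{j\}$ and its unique blocking edge is $\{j\}$ as well, giving $R_0^{(k)}=\{j\}$) and \emph{thin} ones (all resources have value $<\OPT^*/\alpha$). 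For a thin $e^A_k$ minimality gives $v((e^A_k)_\resources) < 2\OPT^*/\alpha$, its blocking edges are thin, and for each blocking edge $e'$, minimality together with $e'_\resources\cap(e^A_k)_\resources\ne\emptyset$ gives $v(e'_\resources\setminus(e^A_k)_\resources) < \OPT^*/\alpha$; hence $v(R_0^{(k)}) < (|B_M(e^A_k)|+2)\cdot\OPT^*/\alpha$.

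I would then build the dual: $y_i = 1$ for $i\in T$ and $0$ otherwise; $z_j = \min\{1,\ \tfrac{\alpha}{\alpha-1}\cdot v_j/\OPT^*\}$ for $j\in R_0$ and $0$ otherwise (one may also tune the scaling per group using the $|B_M(e^A_k)|$). Dual feasibility $\sum_{j\in C} z_j \ge y_i$ for $i\in T$ and $C\in\mathcal C(i,\OPT^*)$ holds because $v(C\setminus R_0)\le v(\resources(i)\setminus R_0) < \OPT^*/\alpha$ forces $v(C\cap R_0) > \OPT^*(1-1/\alpha)$: either some resource of $C\cap R_0$ already has $z_j=1$, or $\sum_{j\in C\cap R_0} z_j = \tfrac{\alpha}{\alpha-1}\, v(C\cap R_0)/\OPT^* > 1$.

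The heart of the argument --- and the place where anything better than the bound $4$ must come from --- is the strict inequality $\sum_{i\in T} y_i > \sum_{j\in\resources} z_j$, i.e. a volume estimate bounding (capped, weighted) $v(R_0)$ against $|T| = 1 + \sum_k|B_M(e^A_k)|$. With the bounds above, each fat group contributes $z$-mass $\le |B_M(e^A_k)|=1$, each thin group contributes $< (|B_M(e^A_k)|+2)/(\alpha-1)$, and using $\sum_k|B_M(e^A_k)|\ge(\text{number of groups})$ the claim reduces to an arithmetic inequality of the form $(4-\alpha)\cdot(\text{number of thin addable edges})\le \alpha-1$. This is exactly the obstacle: it yields only $\alpha=4$, the bottleneck being a long chain of thin addable edges each with a single blocking edge. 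To reach $\alpha = 3+5/6$ one has to be less wasteful --- for instance by not charging each configuration of value $\ge\OPT^*$ against a single group but exploiting that such a configuration passing through several thin groups collects $z$-mass from all of them, and/or by weighting $y$ and $z$ according to depth in the alternating tree so that the ``number of thin groups'' term is absorbed, while carefully balancing the fat versus thin contributions. I expect the bulk of the work, and the genuinely new idea, to live in this last accounting step; the rest is the standard Asadpour--Feige--Saberi setup. Once $\sum_i y_i > \sum_j z_j$ is in hand, scaling $y,z$ makes the dual unbounded, so the configuration LP is infeasible at $\OPT^*$, contradicting the hypothesis; hence an addable edge exists.
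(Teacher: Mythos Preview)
Your setup is correct and matches the paper: the contradiction via Theorem~\ref{theorem-condition-sc}, the characterization of ``no addable edge'' as $v(\resources(i)\setminus R_0)<\OPT^*/\alpha$ for all $i\in T$, the disjoint decomposition $R_0=\bigcup_k R_0^{(k)}$, the fat/thin split, and the minimality bounds $v((e^A_k)_\resources)<2\OPT^*/\alpha$ and $v(e'_\resources\setminus(e^A_k)_\resources)<\OPT^*/\alpha$ are all exactly what the paper uses. You also correctly identify that with your choice $z_j=\min\{1,\tfrac{\alpha}{\alpha-1}v_j/\OPT^*\}$ the accounting only closes for $\alpha=4$, the bottleneck being a thin addable edge with a single blocking edge.

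The gap is that you do not supply the idea that breaks this bottleneck, and the directions you speculate about (charging a configuration across several groups, depth-dependent weights) are not what the paper does. The paper's fix is purely local: for thin $j\in R_0$ it sets $z_j=\min\{1/3,\ \beta\,v_j/\OPT^*\}$ with $\beta=23/15$, i.e.\ it \emph{caps thin resources at $1/3$} rather than at $1$. Dual feasibility then needs a short case split on how many resources of $C$ hit the cap (three capped resources already give $z(C)\ge 1$; with at most two, the uncapped remainder carries enough value because $\beta(1-3/\alpha)=1/3$). For the objective, the troublesome case $|B(e)|=1$ is handled by looking at $v_{\min}$ in $e_\resources\cup B(e)_\resources$: if $v_{\min}\le\tfrac12\OPT^*/\alpha$ one gets $z(e_\resources\cup B(e)_\resources)\le\beta\cdot\tfrac{5}{2\alpha}=1$; if $v_{\min}>\tfrac12\OPT^*/\alpha$ then minimality forces $|e_\resources\cup B(e)_\resources|=3$, and the cap gives $z\le 3\cdot\tfrac13=1$. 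So the missing ingredient is this $1/3$-cap together with the $v_{\min}$ case analysis, not any global re-weighting.
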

\begin{proof}
In the proof we use the constant $\beta = 1 + 8/15 \approx 1.53333$, that has been chosen
so as to minimize $\alpha$.
Assume toward contradiction that edge remains that can be added to $A$,
but $i_0$ is not covered.
In the remainder of the proof we will write $B$ instead of $B_M(A)$
and $B(e)$ instead of $B_M(e)$,
since $M$ and $A$ are constant throughout the proof.
Define
\begin{equation*}
  y_i = \begin{cases}
    1 &\text{if $i\in B_\players\cup \{i_0\}$},\\
    0 &\text{otherwise}.\\
  \end{cases}
\end{equation*}
\begin{equation*}
  z_j = \begin{cases}
    1 &\text{if } v_j \ge \OPT^*/\alpha \text{ and } j \in A_\resources \cup B_\resources,\\
    \min\{1/3, \beta \cdot v_j / \OPT^*\}  &\text{if } v_j < \OPT^*/\alpha \text{ and } j \in A_\resources \cup B_\resources,\\
    0 &\text{otherwise}.\\
  \end{cases}
\end{equation*}
We refer to the resources $j$ where $v_j \ge \OPT^*/\alpha$ as
fat resources and to others as thin resources. Note that by
minimality of edges in $\mathcal H$, each edge containing
a fat resource does not contain any other resources.
We call these the fat edges. Likewise, edges that contain only thin
resources are referred to as thin edges.
\begin{claim}
  \label{claim-feasibility}
  $(y, z)$ is a feasible solution for the dual of the configuration LP.
\end{claim}
\begin{claim}
  \label{claim-negativity}
  $(y, z)$ has a negative objective value, that is to say $\sum_{j\in\resources} z_j < \sum_{i\in\players} y_i$.
\end{claim}
By Theorem~\ref{theorem-condition-sc} this implies that the configuration LP is infeasible for $\OPT^*$.
A contradiction.
\end{proof}
\begin{proof}[Proof of Claim~\ref{claim-feasibility}]
Let $i\in\players$ and $C\in \mathcal C(i, \OPT^*)$. We need to show that $y_i \le z(C)$.
If $y_i = 0$ or $C$ contains a fat resource, this is trivial.
Hence, assume w.l.o.g. that $C$ consists solely of thin resources and $y_i = 1$.

Since no addable edge for $i$ remains,
$v(C \setminus (A_\resources \cup B_\resources)) < \OPT^*/\alpha$.
Let $S\subseteq C$ be the resources $j\in C$ which have $z_j = 1/3$.
\begin{description}
  \item[Case 1: $|S| = 3$.] Then $z(C) \ge z(S) \ge 3 \cdot 1/3 = 1$.
  \item[Case 2: $|S| \le 2$.] 
Define $C' := (C \cap (A_\resources \cup B_\resources)) \setminus S$.
Then
\begin{equation*}
  v(C') > v(C) - v(C \setminus (A_\resources \cup B_\resources)) - v(S) \ge \OPT^* - \OPT^*/\alpha - v(S) .
\end{equation*}
Therefore,
\begin{multline*}
  z(C) \ge 1/3 \cdot |S| + \beta / \OPT^* \cdot v(C') > 1/3 |S| + \beta (1 - 1/\alpha - v(S) / \OPT^*) \\
  \ge 1/3 |S| + \beta (1 - (|S| + 1) /\alpha) =: (*) .
\end{multline*}
Since $\beta / \alpha = 0.4 > 1/3$, the coefficient of $|S|$ in $(*)$ is negative
and thus we can substitute $|S|$ for its upper bound, i.e., $2$.
By inserting the values of $\alpha$ and $\beta$ we get,
\begin{equation*}
  (*) \ge 2/3 + \beta (1 - 3/\alpha) = 1 . \qedhere
\end{equation*}
\end{description}
\end{proof}
\begin{proof}[Proof of Claim~\ref{claim-negativity}]
We write in the following $F^f$ ($F^t$) for the fat edges (thin edges, respectively) in a set of edges $F$.
First note that every fat edge with positive $z$ value must be in a fat blocking edge and therefore
\begin{equation*}
  \sum_{j\in \mathcal R^f} z_j \le |B^f| .
\end{equation*}
Now consider thin addable edges.
Since every addable edge is blocked, $|B(e)| \ge 1$ for every $e\in A^t$.
We now proceed to show that for every $e\in A^t$
\begin{equation*}
  z(e_\resources\cup B(e)_\resources) \le |B(e)|.
\end{equation*}
Note that for every thin resource $j$, we have $z_j \le \beta / \OPT^* \cdot v_j$. By minimality of edges in $\mathcal H$, it holds that
$v(e_\resources) \le 2 \OPT^* / \alpha$ (each element
in $e_\resources$ has value at most $\OPT^* / \alpha$).
Also $v(e'_\resources \setminus e_\resources) \le \OPT^*/\alpha$
for each $e'\in B(e)$, since the intersection of $e_\resources$
and $e'_\resources$ is non-empty.
If $|B(e)| \ge 2$, this implies
\begin{multline*}
  z(e_\resources\cup B(e)_\resources) \le \beta / \OPT^* \cdot (v(e_\resources) + v(B(e)_\resources \setminus e_\resources)) \\
  \le \beta \cdot (2/\alpha + |B(e)| \cdot 1/\alpha) \le |B(e)| \cdot \beta \cdot 2 /\alpha = 0.8 \cdot |B(e)| < |B(e)| .
\end{multline*}
Assume in the following that $|B(e)| = 1$.
Let $v_{\min}$ be the value of the smallest element
in $e_\resources\cup B(e)_\resources$.
Then $v(e_\resources\cup B(e)_\resources) \le 2 \OPT^* / \alpha + v_{\min}$: If the smallest element is in $e_\resources$, then
\begin{equation*}
  v(e_\resources\cup B(e)_\resources) \le \underbrace{\OPT^* / \alpha + v_{\min}}_{\ge v(e_\resources)} + v(B(e)_\resources\setminus e_\resources)
  \le 2\OPT^*/\alpha + v_{\min} .
\end{equation*}
The same argument (swapping the role of $e$ and $B(e)$) holds,
if the smallest element is in $B(e)$.
Therefore, if $|B(e)| = 1$ and $v_{\min} \le 1/2 \cdot \OPT^*/\alpha$,
\begin{equation*}
  z(e_\resources\cup B(e)_\resources) \le \beta \cdot (2/\alpha + v_{\min}) \le \beta \cdot 5/2 \cdot 1/\alpha = 1 = |B(e)| .
\end{equation*}
If  $|B(e)| = 1$ and $v_{\min} > 1/2 \cdot \OPT^*/\alpha$,
then $B(e)_\resources\setminus e_\resources, B(e)_\resources\cap e_\resources,$ and $e_\resources\setminus B(e)_\resources$ have at least one and by minimality of edges at most one element.
Since the $z$ value of each thin edge is at most $1/3$,
\begin{equation*}
  z(e_\resources\cup B(e)_\resources) = z(e_\resources\setminus B(e)_\resources) + z(e_\resources\cap B(e)_\resources) + z(B(e)_\resources \setminus e_\resources) \le 3 \cdot 1/3 = 1 = |B(e)| .
\end{equation*}
We conclude that
\begin{equation*}
  \sum_{i\in\players} y_i = |B^f| + |B^t| + 1 > |B^f| + \sum_{e\in A^t} |B(e)| \ge \sum_{j\in\resources^f} z_j + \sum_{e\in A^t} z(e_\resources\cup B(e)_\resources) = \sum_{j\in\resources} z_j . \qedhere
\end{equation*}
\end{proof}

\bibliography{santa2}
\appendix

\end{document}